\documentclass[11pt]{article}

\usepackage{epsfig}

\newtheorem{thm}{Theorem}

\usepackage{amsfonts}
\usepackage{mathrsfs}
\usepackage{amssymb}
\usepackage{stmaryrd}
\usepackage{amsmath}
\usepackage{graphicx}
\usepackage{subfig}
\usepackage{color}
\usepackage{ifpdf}

\newenvironment{proof}{\makebox[7ex][l]{\it Proof:\/}}{\hfill\/ \hfill\/ {\it
Q.E.D.} \vspace{0.5ex}\\}

\begin{document}

\makeatletter

\title{\Large \bf Hybrid Control Technique for Switched LPV Systems and Its Application to Active Magnetic Bearing System}
% A Hybrid Control Framework for Switched LPV Systems with Application to Active Magnetic Bearings

\author{Fen Wu\thanks{Email: {\tt fwu@ncsu.edu}, Tel: (919) 515-5268.}
\vspace*{0.1in} \\
Dept. of Mechanical and Aerospace Engineering \\
North Carolina State University, Raleigh, NC 27695, USA}

\date{February 21, 2026}

\maketitle

\begin{abstract}
This paper proposes a novel hybrid control framework for switched linear parameter-varying (LPV) systems under hysteresis switching logic.
By introducing a controller state-reset mechanism, the
hybrid LPV synthesis problem is reformulated as a convex optimization problem expressed in terms of linear matrix inequalities (LMIs), enabling efficient computation of both switching LPV controller gains and reset matrices.
The proposed approach is then applied to active magnetic bearing (AMB) systems, whose rotor dynamics exhibit strong dependence on rotational speed.
Conventional LPV designs are often conservative due to large speed variations.
The proposed hybrid gain-scheduled controller explicitly accounts for bounds on parameter variation rates, employs multiple LPV controllers over distinct operating regions, and uses hysteresis switching to reduce chattering and ensure stability.
The effectiveness of the approach is demonstrated through a detailed AMB control design example.
%   This design procedure can also be extended to
%   high-performance control of turbo-machinery problems.
\end{abstract}

{\bf Keywords:} Hybrid control, switched LPV systems, hysteresis switching logic, controller state reset, LMI optimization.

\section{Introduction}
\label{sec:intro}

Hybrid systems have received sustained attention in the control community due to their ability to model and control complex dynamical behaviors that cannot be adequately addressed by purely continuous or discrete frameworks.
By integrating continuous dynamics with logical decision mechanisms, hybrid control architectures can effectively handle large uncertainties, nonlinearities, and mode-dependent behaviors, often alleviating fundamental limitations of conventional control designs and enabling improved performance and flexibility \cite{BraBM.TAC98,Ant.IEEE00,McCK.IEEE00,SchS.B00,GoeST.B12}.
As an important subclass of hybrid systems, switched linear systems which consisting of multiple linear subsystems and a rule governing switching among them, have been extensively studied \cite{LiberzonB03,SunG.B05,LinA.TAC09}.
Despite the availability of powerful linear analysis tools, stability and stabilization of switched systems remain challenging, since instability may arise even when all individual subsystems are asymptotically stable \cite{LibM.CSM99}.
Consequently, the switching signal plays a critical role in determining system behavior and must be explicitly considered in controller design.

For autonomous switching, stabilization can be treated as a robust control problem, where the existence of a common Lyapunov function (CLF) guarantees stability under arbitrary switching \cite{LiberzonB03}.
However, CLF-based conditions are often overly conservative, particularly when specific switching logics are prescribed. This has motivated the development of multiple Lyapunov function (MLF) approaches \cite{Bra.TAC98,YeMH.TAC98}, which enable less conservative stability analysis under structured switching schemes such as dwell-time, average dwell-time, and hysteresis switching \cite{Mor.B97,Bra.TAC98,YeMH.TAC98,LibM.CSM99}.
Among these, hysteresis switching is particularly attractive as it implicitly enforces dwell-time, prevents chattering near switching boundaries, and improves robustness to noise and uncertainty.

While MLF-based methods have been successfully applied to state-feedback synthesis using LMI formulations, output-feedback control of switched systems remains substantially more difficult. In particular, enforcing boundary conditions between Lyapunov functions typically leads to nonconvex bilinear matrix inequalities (BMIs), which are computationally challenging.
Existing approaches either decouple controller synthesis from switching conditions, resulting in conservative performance or excessive dwell-time requirements or rely on restrictive assumptions such as full state availability at switching instants \cite{LuW.Au04,LuW.ACC04}.
More recently, hybrid output-feedback frameworks incorporating controller-state reset mechanisms have been developed for switched systems under dwell-time switching logic.
This enabled convex LMI-based synthesis while explicitly accounting for switching-induced boundary conditions, significantly reducing conservatism and improving performance \cite{YuaW2015,YuaLWD2016}.

Parallel to these developments, hysteresis switching logic has been widely studied as a unifying mechanism in adaptive and hybrid control.
Early work by Morse, Mayne, and Goodwin \cite{MorMG92} demonstrated how hysteresis-based switching can prevent chattering and ensure robustness in parameter adaptive control. Subsequent research extended hysteresis logic to switched and linear parameter-varying (LPV) systems \cite{LuW.Au04}, where it has been shown to improve stability margins, enforce implicit dwell-time constraints, and enhance robustness to modeling uncertainty and measurement noise.
Recent efforts further integrate hysteresis switching within hybrid inclusion frameworks and optimization-based controller synthesis \cite{GoeST.B12}.

Active magnetic bearings (AMBs) provide non-contact electromagnetic support for rotors in high-speed rotating machinery and are increasingly used in aerospace, energy storage, and precision industrial applications.
Their advantages include the elimination of mechanical friction and lubrication, reduced wear, and the capability to actively suppress synchronous vibrations due to imbalance and shaft runout.
However, AMBs are inherently open-loop unstable, and controller design is further complicated by strong coupling effects, gyroscopic forces, and rotor flexibility especially at high rotational speeds.

To address these challenges, gain-scheduling and LPV control techniques have been widely investigated for AMB systems. Early LPV approaches focused on rigid-rotor models with rotor speed as the scheduling parameter, revealing significant conservatism when large speed variations are present \cite{TsiM96,SivN96}. Flexible rotor effects were later addressed using robust control and $\mathcal{H}_\infty$ or $\mu$-synthesis methods, often at the expense of neglecting time-varying gyroscopic effects for tractability \cite{FujHM93,NonI96}.
Switching LPV control frameworks were subsequently proposed to reduce conservatism by partitioning the parameter space and designing locally optimized controllers, with stability ensured via Lyapunov continuity across parameter boundaries \cite{Wu2001}. More recent work has explored gain-scheduled and LPV controller implementation using Youla parametrization and smooth switching or interpolation techniques to guarantee stability during controller transitions \cite{BalWS2012,AtoSMM2022}.

Despite these advances, existing LPV and switching-based AMB controllers often suffer from conservatism when large parameter variations are present, or impose restrictive stability constraints that limit achievable performance.
In particular, the systematic integration of hysteresis switching logic and hybrid controller-state resets into LPV synthesis remains underexplored.

Motivated by the development of hybrid control for switched system under dwell-time logic \cite{YuaW2015},
we propose a hybrid LPV control framework with hysteresis switching logic for robust gain-scheduling control in this paper.
By combining switching LPV controllers with a controller state-reset mechanism, the proposed approach guarantees closed-loop stability under hysteresis switching while leading to convex LMI-based synthesis conditions.
This enables simultaneous design of local LPV controllers and reset matrices, achieving improved performance with reduced conservatism.
The effectiveness of the proposed method is demonstrated through an active magnetic bearing application, where the resulting controller accommodates wide variations in rotor speed and achieves improved control performance, particularly in the high-speed operating regime.

The notation in this paper is standard.
The sets of real and positive numbers are denoted by ${\bf R}$ and ${\bf R}_+$, respectively.
The set of real $m \times n$ matrices is ${\bf R}^{m\times n}$, and the transpose of a real matrix $M$ is written as $M^T$.
For real matrices, the Hermitian operator ${\it He} \{\cdot\}$ is defined by ${\it He} \{M\} = M + M^T$.
The identity matrix of appropriate dimension is denoted by $I$.
The sets of real symmetric and symmetric positive definite $n\times n$ matrices are denoted by ${\bf S}^{n\times n}$
and ${\bf S}^{n\times n}_+$, respectively.
For $M \in{\bf S}^{n\times n}$, the notation $M > 0$ ($M
\geq 0$) indicates that $M$ is positive definite (positive
semi-definite), while $M < 0$ ($M \leq 0$) denotes negative
definite (negative semi-definite) matrices.
The symbol $\star$ in LMIs represents entries implied by
symmetry.
For $x \in {\bf R}^{n}$, the Euclidean norm is defined as $\|x\| := (x^T x)^{1/2}$.
The space of square-integrable functions is denoted by
${\cal L}_2$; for any $u\in {\cal L}_2$, $\| u \|_2
= [\int^{\infty}_{0} u^T(t) u(t) dt]^{1/2}$.
Finally, for integers $k_1 < k_2$, ${\bf I}[k_1,k_2] =
\{k_1, k_1+1, \cdots, k_2\}$.

The remainder of this paper is organized as follows.
Section \ref{sec:hyblpv} presents the formulation of the hybrid LPV control problem and derives the corresponding LMI-based synthesis conditions.
Section \ref{sec:amb} applies the proposed approach to active magnetic bearing systems and develops hybrid LPV controllers that achieve improved performance while guaranteeing stability under hysteresis switching.
Finally, Section \ref{sec:conclusion} concludes the paper.

\section{Hybrid LPV Control Approach}
\label{sec:hyblpv}

Consider an open-loop switched LPV system whose state-space matrices depend on a scheduling parameter $\rho$.
It is assumed that $\rho$ belongs to a compact set ${\cal P} \subset {\bf R}^s$ and that its variation rate satisfies
\[
{\cal V} = \left\{ \nu: \underline{\nu}_k \leq \dot{\rho}_k \leq \bar{\nu}_k, k = 1,2, \dots, s \right\},
\]
where ${\cal V}$ is a convex polytope containing the origin.

Suppose that the parameter set ${\cal P}$ is covered by a finite collection of closed subsets ${\cal P}_i, i \in Z_{N_p}$, separated by a family of switching surfaces ${\cal S}_{ij}$, with index set $Z_{N_p} = \left\{1, 2, \ldots, N_p \right\}$ and $\mathcal{P} = \bigcup \mathcal{P}_i$.
Adjacent parameter subsets have overlapping interiors and are separated by switching surfaces.
Within each parameter subset ${\cal P}_i$, the system dynamics are described by
\begin{align}
\begin{bmatrix}
\dot{x}(t) \\
z(t) \\
y(t)
\end{bmatrix} & = \begin{bmatrix}
A_i(\rho(t)) & B_{1,i}(\rho(t)) & B_{2,i}(\rho(t)) \\
C_{1,i}(\rho(t)) & D_{11,i}(\rho(t)) & D_{12,i}(\rho(t)) \\
C_{2,i}(\rho(t)) & D_{21,i}(\rho(t)) & D_{22,i}(\rho(t))
\end{bmatrix}
\begin{bmatrix}
x(t) \\
w(t) \\
u(t)
\end{bmatrix}
\label{eqn:lpvolp}
\end{align}
where $x \in {\bf R}^n$ is the plant state, $e \in {\bf R}^{n_e}$ the controlled output, $d \in {\bf R}^{n_d}$ the disturbance input, $y \in {\bf R}^{n_y}$ the measured output, and $u \in {\bf R}^{n_u}$ the control input.
All state-space matrices are continuous functions of $\rho$.
Throughout the paper, the following standard assumptions are imposed:
\begin{description}
\item[(A1)]
The triple $(A_i(\rho), B_{2,i}(\rho), C_{2,i}(\rho))$ is parameter-dependently stabilizable and detectable for all $\rho$.
\item[(A2)]
The matrices $D_{12,i}(\rho)$ and $D_{21,i}(\rho)$ have full column and row rank, respectively.
\item[(A3)]
$D_{11,i}(\rho) = 0$ and $D_{22,i}(\rho) = 0$.
\end{description}

The control objective is to construct a hybrid LPV control law with hysteresis switching logic that asymptotically stabilizes the switched LPV plant (\ref{eqn:lpvolp}) while achieving optimal $\mathcal{L}_2$-gain performance.
The synthesis conditions are required to be expressed in terms of linear matrix inequalities (LMIs).
In the context of the active magnetic bearing (AMB) application, the control objective further includes disturbance rejection, gyroscopic compensation, and automatic balancing, which can be formulated as the minimization of rotor displacement from its centerline in the presence of unknown torque disturbances.

When hysteresis switching is employed, any two adjacent parameter subsets are required to overlap, as shown in Fig. \ref{fig:hystere}. Consequently, two switching surfaces exist between adjacent subsets. The surface ${\cal S}_{ij}$ specifies a one-directional transition from ${\cal P}_i$ to ${\cal P}_j$, thereby preventing chattering and ensuring well-posed switching behavior.

\begin{figure}[!htb]
\centering
%   \centering\psfig{file=epsfile/hystere.eps,width=3.0in,height=1.45in}
\includegraphics[width=0.5\hsize]{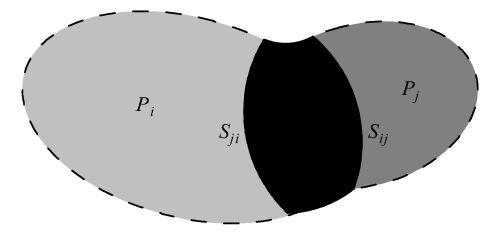}
\caption{Hysteresis switching regions.}
\label{fig:hystere}
\end{figure}

The switching event occurs when the parameter trajectory hits one of the switching surfaces $\mathcal{S}_{ij}$ or $\mathcal{S}_{ji}$.
The evolution of the switching signal $\sigma$ is described as
follows: Let $\sigma(0) = i$ if $\rho(0) \in \mathcal{P}_i$. For each $t > 0$, if $\sigma(t^-) = i$ and $\rho(t) \in \mathcal{P}_i$, keep $\sigma(t) = i$.
On the other hand, if $\sigma(t^-) = i$ but $\rho(t) \in \mathcal{P}_j \cap \mathcal{P}_i$, i.e., hitting the switching
surface $\mathcal{S}_{ij}$, let $\sigma(t) = j$.
%   Repeating this procedure, we generate a piecewise constant
%   signal $\sigma$ which is continuous from the right
%   everywhere.

The family of hybrid LPV controllers is given by
\begin{align}
\begin{aligned}
\begin{bmatrix}
\dot{x}_k \\
u
\end{bmatrix} &= \begin{bmatrix}
A_{k,i}(\rho,\dot{\rho}) & B_{k,i}(\rho) \\
C_{k,i}(\rho) & D_{k,i}(\rho)
\end{bmatrix}
\begin{bmatrix}
x_{k} \\
y
\end{bmatrix}, \qquad i \in Z_{N_p} \\
x_k^+ &= \Delta_{ij}(\rho) x_k, \quad \mbox{when switching occurs}
\end{aligned}
\label{eqn:hyblpvctrl}
\end{align}
where each controller is designed for a specific parameter subset ${\cal P}_i$, and $x_k \in {\bf R}^{n_k}$ denotes the controller state.
Control objectives may differ-and may even be conflicting-across parameter regions.
Each local LPV controller is designed to achieve optimal performance within its corresponding subset while preserving closed-loop stability under the hysteresis switching logic.

As illustrated in Fig. \ref{Fig.swHyb}, the proposed architecture consists of a switching dynamic output-feedback LPV controller and a supervisory logic enforcing controller state reset rules through an embedded hybrid loop (denoted by the gray block ${\cal H}$).
During flow intervals, the system evolves continuously under the active controller.
At switching instants, the controller state undergoes a discrete jump.
%   This structure is motivated by reset control and hybrid
%   dynamical system theory \cite{GoeST.B12}.

%%%%%%%%%%%%%%%%%%%%%%%%%%%%%%%%%%%%%%%%%%%%%%%%
\begin{figure}[!htb]
\centering
\includegraphics[scale=0.35]{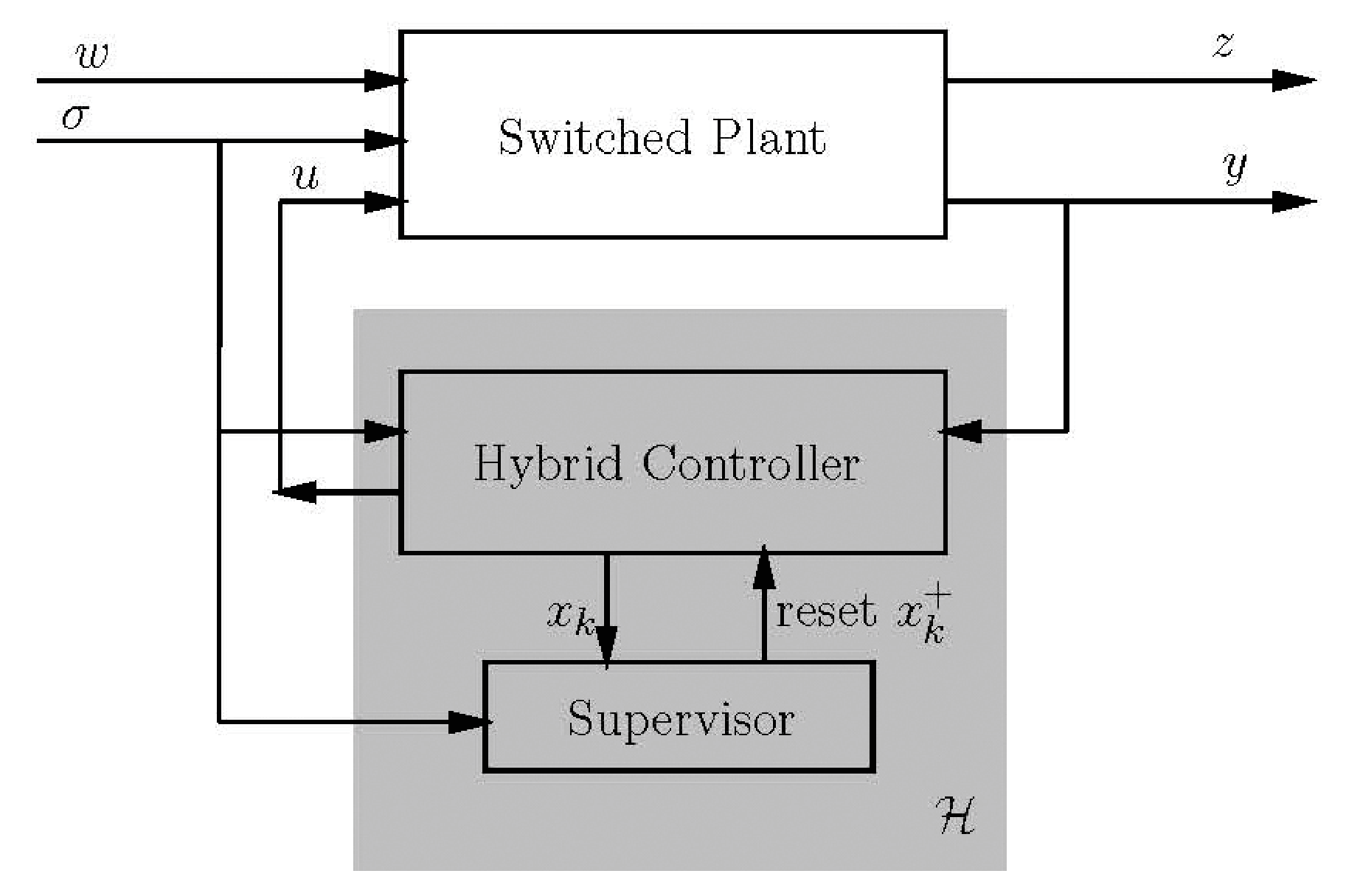}
\caption{The proposed hybrid control scheme.}
\label{Fig.swHyb}
\end{figure}
%%%%%%%%%%%%%%%%%%%%%%%%%%%%%%%%%%%%%%%%%%%%%%%%%

Under the proposed hybrid LPV controller, the closed-loop system takes the form
\begin{align}
\begin{aligned}
\begin{bmatrix}
\dot{x}_{cl} \\
z
\end{bmatrix} &= \begin{bmatrix}
A_{cl,i}(\rho,\dot{\rho}) & B_{cl,i}(\rho) \\
C_{cl,i}(\rho) & D_{cl,i}(\rho)
\end{bmatrix}
\begin{bmatrix}
x_{cl} \\
w
\end{bmatrix} \\
x_{cl}^+ & = A_{r,ij}(\rho) x_{cl}, \quad \mbox{when switching occurs}
\end{aligned}
\label{eqn:lpvclp}
\end{align}
where $x^T_{cl} = \left[ x^T \ \ x^T_k \right] \in {\bf
R}^{2n}$ and
\begin{align*}
A_{cl,i}(\rho) & = \begin{bmatrix}
A_{i}(\rho) + B_{2,i}(\rho) D_{k,i}(\rho) C_{2,i}(\rho) & B_{2,i}(\rho) C_{k,i}(\rho) \\
B_{k,i}(\rho) C_{2,i}(\rho) & A_{k,i}(\rho)
\end{bmatrix} \\
B_{cl,i}(\rho) &= \begin{bmatrix}
B_{1,i}(\rho) + B_{2,i}(\rho) D_{k,i}(\rho) D_{21,i}(\rho) \\
B_{k,i}(\rho) D_{21,i}(\rho)
\end{bmatrix} \\
C_{cl,i}(\rho) &= \begin{bmatrix}
C_{1,i}(\rho) + D_{12,i}(\rho) D_{k,i}(\rho) C_{2,i}(\rho) & D_{12,i}(\rho) C_{k,i}(\rho)
\end{bmatrix} \\
D_{cl,i}(\rho) &= \begin{bmatrix}
D_{11,i}(\rho) + D_{12,i}(\rho) D_{k,i}(\rho) D_{21,i}(\rho)
\end{bmatrix} \\
A_{r,ij}(\rho) & = \begin{bmatrix}
I_n & 0 \\
0 & \Delta_{ij}(\rho)
\end{bmatrix}.
\end{align*}
The resulting closed-loop dynamics constitute a switched LPV system with possible state discontinuities and controller gain changes at switching surfaces.

Let $X_i(\rho)$ denote a positive-definite matrix function associated with the Lyapunov function $V_i(x_{cl},\rho) = x_{cl}^T X_i(\rho) x_{cl}$ when the $i$th controller is
active.
On the switching surface $\mathcal{S}_{ij}$, the hysteresis switching condition requires
\begin{equation}
\label{eqn:hystana_LMI1}
V_j(x_{cl},\rho) \leq V_i(x_{cl},\rho)
\end{equation}
which guarantees that the Lyapunov function does not increase during switching from ${\cal P}_i$ to ${\cal P}_j$. Consequently, the $j$th controller can be safely activated.
Within each parameter subset ${\cal P}_i$, the closed-loop system is required to satisfy a bounded real condition with performance level $\gamma_i$:
\begin{align}
& \begin{bmatrix}
\begin{matrix}
A^T_{cl,i}(\rho) X_i(\rho) + X_i(\rho) A_{cl,i}(\rho) \\ + \displaystyle{\sum_{k=1}^s}
\left\{\underline{\nu}_k, \overline{\nu}_k \right\} \frac{\partial X_i}{\partial \rho_k} \end{matrix}
& X_i(\rho) B_{cl,i}(\rho) & C_{cl,i}^T(\rho) \\
B^T_{cl,i}(\rho) X_i(\rho) & -\gamma_i I_{n_d} & D_{cl,i}^T(\rho) \\
C_{cl,i}(\rho) & D_{cl,i}(\rho) & -\gamma_i I_{n_e}
\end{bmatrix} < 0 \label{eqn:hystana_LMI2}
\end{align}
Standard LPV results \cite{WuYPB96,HugW.B12} ensure that these conditions are automatically satisfied by solutions derived from conventional LPV analysis, implying that the proposed hybrid formulation is no more conservative than the single Lyapunov function approach.

Based on the hysteresis switching logic, sufficient synthesis conditions for hybrid LPV control are summarized in the following theorem.

\begin{thm}
\label{thm:hyst_syn}
Consider the open-loop switched LPV system (\ref{eqn:lpvolp}), the parameter set ${\cal P}$ and an overlapped covering $\left\{{\cal P}_i \right\}_{i \in Z_{N_p}}$.
Suppose there exist continuously differentiable, positive definite matrix functions $R_i(\cdot), S_i(\cdot):
\mathbf{R}^s \rightarrow \mathbf{S}^{n \times n}_+, i \in Z_{N_p}$, such
that for any $\rho \in \mathcal{P}_i$
\begin{align}
\mathcal{N}^T_R(\rho) \begin{bmatrix}
\begin{matrix} R_i(\rho) A_i^T(\rho) + A_i(\rho) R_i(\rho) \\
- \displaystyle{\sum_{k=1}^s} \left\{
\underline{\nu}_k, \overline{\nu}_k \right\} \frac{\partial R_i}{\partial \rho_k} \end{matrix}
& R_i(\rho) C_{i,1}^T(\rho) & B_{i,1}(\rho) \\
C_{i,1}(\rho) R_i(\rho) & -\gamma_i I_{n_e} & D_{i,11}(\rho) \\
B^T_{i,1}(\rho) & D^T_{i,11}(\rho) & -\gamma_i I_{n_d}
\end{bmatrix} \mathcal{N}_R(\rho) &< 0 \label{eqn:hystLMI1} \\
\mathcal{N}^T_S(\rho) \begin{bmatrix}
\begin{matrix} A_i^T(\rho) S_i(\rho) + S_i(\rho) A_i(\rho) \\
+ \displaystyle{\sum_{k=1}^s} \left\{
\underline{\nu}_k, \overline{\nu}_k \right\} \frac{\partial S_i}{\partial \rho_k} \end{matrix}
& S_i(\rho) B_{i,1}(\rho) & C_{i,1}^T(\rho) \\
B^T_{i,1}(\rho) S_i(\rho) & -\gamma_i I_{n_d} & D_{i,11}^T(\rho) \\
C_{i,1}(\rho) & D_{i,11}(\rho) & -\gamma_i I_{n_e}
\end{bmatrix} \mathcal{N}_S(\rho) &< 0 \label{eqn:hystLMI2} \\
\begin{bmatrix}
R_i(\rho) & I_{n} \\
I_{n} & S_i(\rho)
\end{bmatrix} &> 0 \label{eqn:hystLMI3}
\end{align}
holds with performance level $\gamma_i > 0$
and for $\rho \in \mathcal{S}_{ij}$, there exist matrices  $\hat{\Delta}_{ij}(\rho)$ satisfying the boundary condition
\begin{align}
\begin{bmatrix}
R_i(\rho) & \star & \star & \star \\
I_n & S_i(\rho) & \star & \star \\
R_i(\rho) & I_n & R_j(\rho) & \star \\
\hat{\Delta}_{ij}(\rho) & S_j(\rho) & I_n & S_j(\rho)
\end{bmatrix} \geq 0, \label{eqn:hystLMI4}
\end{align}
where
$\mathcal{N}_R(\rho) = \emph{Ker} \begin{bmatrix}
B^T_{i,2}(\rho) & D^T_{i,12}(\rho) & 0
\end{bmatrix}$ and $\mathcal{N}_S(\rho) = \emph{Ker} \begin{bmatrix}
C_{i,2}(\rho) & D_{i,21}(\rho) & 0
\end{bmatrix}$.
Then, under the hysteresis switching logic, the closed-loop switched LPV system (\ref{eqn:lpvclp}) is
asymptotically stable stable over the entire parameter set ${\cal P}$ when controlled by the hybrid LPV controllers  (\ref{eqn:hyblpvctrl}).
Moreover, the closed-loop system achieves the ${\cal L}_2$
gain  performance $\|z \|_2 < \gamma \|w \|_2, \gamma = \max \left\{ \gamma_i \right\}_{i \in Z_{N_p}}$.
\end{thm}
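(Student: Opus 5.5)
The proof splits into two parts. First, on each region ${\cal P}_i$ we recover a full-order LPV controller and a closed-loop Lyapunov matrix $X_i(\rho)$ that satisfy the bounded real condition (\ref{eqn:hystana_LMI2}). Second, on each switching surface ${\cal S}_{ij}$ we show that (\ref{eqn:hystLMI4}) yields a reset matrix $\Delta_{ij}(\rho)$ enforcing the non-increase condition (\ref{eqn:hystana_LMI1}) for the reset state. Stability and the ${\cal L}_2$ bound then follow from a piecewise (multiple) Lyapunov function argument along the hysteresis switching signal $\sigma$.

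\emph{Step 1: local synthesis.} Fix $i$. The conditions (\ref{eqn:hystLMI1})--(\ref{eqn:hystLMI3}) are exactly the parameter-dependent LPV synthesis conditions of \cite{WuYPB96}, with the rate terms handled vertex-wise over ${\cal V}$ (the sign flip on $\dot R_i$ reflects $R_i=S^{-1}$-type duality). Using (\ref{eqn:hystLMI3}), pick $M_i N_i^T = I - R_iS_i$ (for example $N_i=S_i$, $M_i=S_i^{-1}-R_i$, which is invertible by strict positivity) and set
\[
X_i(\rho)=\begin{bmatrix} S_i & N_i\\ N_i^T & \star\end{bmatrix},\qquad
X_i^{-1}(\rho)=\begin{bmatrix} R_i & M_i\\ M_i^T & \star\end{bmatrix}.
\]
The projection lemma, in the parameter-dependent form of \cite{WuYPB96,HugW.B12}, then gives gains $A_{k,i}(\rho,\dot\rho)$, $B_{k,i}$, $C_{k,i}$, $D_{k,i}$ for which (\ref{eqn:hystana_LMI2}) holds for all $\rho\in{\cal P}_i$ and $\dot\rho\in{\cal V}$. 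The $\dot\rho$-dependence of $A_{k,i}$ is what absorbs the $\dot N_i$ terms. Along any flow interval on which $\sigma=i$, this gives $\dot V_i + \gamma_i^{-1} z^Tz - \gamma_i w^Tw < 0$.

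\emph{Step 2: reset via (\ref{eqn:hystLMI4}); this is the main obstacle.} Condition (\ref{eqn:hystana_LMI1}) after the reset reads $A_{r,ij}^T X_j A_{r,ij} \le X_i$ on ${\cal S}_{ij}$. By a Schur complement this is equivalent to
\[
\begin{bmatrix} X_i & A_{r,ij}^T X_j\\ X_jA_{r,ij} & X_j\end{bmatrix}\ge0 ,
\]
which is bilinear in $X_j$ and $\Delta_{ij}$. The plan is to apply the congruence ${\rm diag}(T_i,T_j)$ with $T_i=\begin{bmatrix} R_i & I\\ M_i^T & 0\end{bmatrix}$, so that $T_i^TX_iT_i=\begin{bmatrix}R_i & I\\ I & S_i\end{bmatrix}$.

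Carrying out the products, the $(1,1)$ and $(2,2)$ diagonal blocks become $\begin{bmatrix}R_i&I\\I&S_i\end{bmatrix}$ and $\begin{bmatrix}R_j&I\\I&S_j\end{bmatrix}$. The off-diagonal block $T_j^T X_j A_{r,ij} T_i$ has entries $R_i$, $I$, and $S_jR_i+N_j\Delta_{ij}M_i^T$, together with the $(2,2)$ entry $S_j$. Defining $\hat\Delta_{ij}=S_jR_i+N_j\Delta_{ij}M_i^T$ reproduces (\ref{eqn:hystLMI4}) exactly. Since $N_j$ and $M_i$ are invertible, $\Delta_{ij}=N_j^{-1}(\hat\Delta_{ij}-S_jR_i)M_i^{-T}$ is recovered. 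The delicate points are the index bookkeeping and checking that the congruence is nonsingular, which again uses (\ref{eqn:hystLMI3}).

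\emph{Step 3: hybrid argument.} Let $V(t)=V_{\sigma(t)}(x_{cl}(t),\rho(t))$. Under hysteresis, switches occur only when $\rho$ crosses ${\cal S}_{ij}$ from inside ${\cal P}_i$ into the overlap. Because $\rho$ is continuous and the surfaces are separated within the overlaps, there is a positive minimum separation between successive switches for bounded $\dot\rho$, which excludes Zeno behaviour. On flows, Step 1 gives $\dot V<-\gamma^{-1}z^Tz+\gamma w^Tw$ with $\gamma=\max_i\gamma_i$; the local inequality with $\gamma_i$ implies this one after scaling, or one normalizes the BRL as $\dot V+z^Tz-\gamma^2w^Tw<0$. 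At switches, Step 2 gives $V(t^+)\le V(t^-)$. Integrating from $0$ to $\infty$ with $x_{cl}(0)=0$ yields $\|z\|_2<\gamma\|w\|_2$. For $w=0$, $V$ is nonincreasing and strictly decreasing on flows with uniform decay, since each $X_i$ is bounded above and below on compact ${\cal P}_i$. Hence $V\to0$ exponentially and the closed loop is asymptotically stable over ${\cal P}$.
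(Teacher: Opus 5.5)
Your proposal is correct and follows essentially the same route as the paper: local LPV synthesis via the elimination/projection lemma with $X_i Z_{1,i} = Z_{2,i}$, the Schur-complemented reset condition congruence-transformed by $\mathrm{diag}(Z_{1,i}, Z_{1,j})$ to obtain (\ref{eqn:hystLMI4}) with $\hat\Delta_{ij} = S_j R_i + N_j \Delta_{ij} M_i^T$, and a multiple-Lyapunov-function argument that gives exponential stability and the $\mathcal{L}_2$ bound with $\gamma = \max_i \gamma_i$. Your concrete choice $N_i = S_i$, $M_i = S_i^{-1} - R_i$, the explicit nonsingularity check of the congruence, and the Zeno-exclusion remark are small additions the paper leaves implicit; note also that passing from $\gamma_i$ to $\gamma$ needs only monotonicity, since $\gamma \geq \gamma_i$, not any scaling.
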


\begin{proof}
Consider the Lyapunov-like functions for the closed-loop system (\ref{eqn:lpvclp}) defined as
\begin{align*}
V_i(x_{cl}) &= x_{cl}^T X_{i}(\rho) x_{cl}, \qquad i \in {\bf I}[1, N_p]
\end{align*}
where each $V_i$ corresponds to the closed-loop subsystem associated with the $i$th parameter subset.

Let the switching instants over a finite interval $[0, T]$
be denoted by $0 = t_0, t_1, \ldots,
t_N$.
From the $(1,1)$ block of the inequality (\ref{eqn:hystana_LMI2})
and the compactness of each parameter subset $\mathcal{P}_i$, it follows that, for all $\rho \in {\cal P}_i$ with the parameter variation rate $\dot{\rho} \in [\underline{\nu}, \bar{\nu}]$,
\[
A^T_{cl,i}(\rho) X_i(\rho) + X_i(\rho) A_{cl,i}(\rho) + \sum_{k=1}^s \dot{\rho}_k
\frac{\partial X_i(\rho)}{\partial \rho_k} \leq -\lambda X_i(\rho)
\]
for some $\lambda > 0$.
Therefore, for any $t \in \left[t_k, t_{k+1}\right)$,
\[
V_\sigma(x_{cl}(t)) \leq e^{-\lambda (t-t_k)} V_\sigma(x_{cl}(t_k)).
\]
Under the hysteresis switching logic, the boundary condition
(\ref{eqn:hystana_LMI1}) guarantees that the Lyapunov function does not increase at switching instants, i.e.,  $V_\sigma(x_{cl}(t_k)) \leq
V_\sigma(x_{cl}(t^-_k))$.
By recursively applying the above inequalities, we obtain
\begin{align*}
V_\sigma(x_{cl}(t))
& \leq e^{-\lambda (t-t_k)} V_\sigma(x_{cl}(t^-_k)) \\
& \leq e^{-\lambda (t-t_k)} e^{-\lambda (t_k - t_{k-1})}
V_\sigma(x_{cl}(t_{k-1})) \\
& \vdots \\
& \leq e^{-\lambda t} V_\sigma(x_{cl}(0)),
\end{align*}
which establishes exponential stability of the closed-loop switched LPV system (\ref{eqn:lpvclp}).

To prove the ${\cal L}_2$-gain performance, consider zero  initial condition $x_{cl}(0) = 0$. From inequality (\ref{eqn:hystana_LMI2}), it follows that, within each parameter subset,
\[
\dot{V}_\sigma + \frac{1}{\gamma}z^T z - \gamma w^T w < 0, \qquad
\gamma = \max \left\{ \gamma_i \right\}_{i \in Z_N}.
\]
Integrating over $[0, T]$ yields
\[
V_\sigma(x_{cl}(T)) - V_\sigma(x_{cl}(0)) + \frac{1}{\gamma} \| z \|^2_2
- \gamma \| w \|^2_2 < 0.
\]
Since $V_\sigma(x_{cl}(T)) \geq 0$ and $V_\sigma(x_{cl}(0)) = 0$, the desired performance bound
$\| z \|_2 < \gamma \| w \|_2$ follows.

Next, define the matrices
\begin{align*}
Z_{1,i}(\rho) = \begin{bmatrix}
R_i(\rho) & I_n \\
M_i^T(\rho) & 0
\end{bmatrix},
\qquad
Z_{2,i}(\rho) = \begin{bmatrix}
I_n & S_i(\rho) \\
0 & N_i^T(\rho)
\end{bmatrix}
\end{align*}
such that $M_i(\rho) N_i^T(\rho) = I - R_i(\rho) S_i(\rho)$ and $X_{i}(\rho) Z_{1,i}(\rho) = Z_{2,i}(\rho)$.
%   Then, we have $X_i^{-1} = -N_i^T R_i M_i^{-T}$.
From condition (\ref{eqn:hystLMI3}), we have
\[
Z_{1,i}^T(\rho) X_{i}(\rho) Z_{1,i}(\rho) = \begin{bmatrix}
R_i(\rho) & I_n \\
I_n & S_i(\rho)
\end{bmatrix} > 0,
\]
which implies $X_{i}(\rho) > 0$ for all $i \in {\bf I}[1,N_p]$.

Applying the Elimination Lemma to condition (\ref{eqn:hystana_LMI2}) removes the controller matrices $A_{k,i}(\cdot), B_{k,i}(\cdot), C_{k,i}(\cdot)$ and $D_{k,i}(\cdot)$, yielding the equivalent LMIs
(\ref{eqn:hystLMI1})-(\ref{eqn:hystLMI2}).
%   multiplying matrix ${\it diag}\{Z_{1,i},I_{n_w},I_{n_z}\}$ %   to the right and its transpose from the left on both sides %   of inequality (\ref{Thm.LMI1*}), we have the following
%   results by congruent transformation \cite{SchGC.TAC97}:
%   \begin{align*}
%   Z_{1,i}^T P_i A_{cl,i} Z_{1,i}
%   &= Z_{2,i}^T A_{cl,i} Z_{1,i} = \\
%   & \hspace*{-0.3in} \begin{bmatrix}
%   A_{p,i} R_i + B_{p2,i} \hat{C}_{k,i} & A_{p,i} + B_{p2,i}
%   \hat{D}_{k,i} C_{p2,i} \\
%   \hat{A}_{k,i} & S_i A_{p,i} + \hat{B}_{k,i} C_{p2,i}
%   \end{bmatrix} \\
%   B_{cl,i}^T P_i Z_{1,i}
%   &= B_{cl,i}^T Z_{2,i} = \\
%   & \hspace*{-0.3in} \begin{bmatrix}
%   B_{p1,i}^T + D_{p21,i}^T \hat{D}_{k,i}^T B_{p2,i}^T &
%   B_{p1,i}^T S_i + D_{p21,i}^T \hat{B}_{k,i}^T
%   \end{bmatrix} \\
%   C_{cl,i} Z_{1,i} = \\
%   & \hspace*{-0.3in} \begin{bmatrix}
%   C_{p1,i} R_i + D_{p12,i} \hat{C}_{k,i} & C_{p1,i} +
%   D_{p12,i} \hat{D}_{k,i} C_{p2,i}
%   \end{bmatrix}
%   \end{align*}
%   where
%   \begin{align}
%   \begin{aligned}
%   \label{barABCD}
%   \hat{A}_{k,i} &= S_i (A_{p,i} + B_{p2,i} D_{k,i} C_{p2,i}) %   R_i + N_i B_{k,i} C_{p2,i} R_i \\
%   &\hspace*{0.25in} + S_i B_{p2,i} C_{k,i} M_i^T + N_i
%   A_{k,i} M_i^T \\
%   \hat{B}_{k,i} &= S_i B_{p2,i} D_{k,i} + N_i B_{k,i} \\
%   \hat{C}_{k,i} &= D_{k,i} C_{p2,i} R_i + C_{k,i} M_i^T \\
%   \hat{D}_{k,i} &= D_{k,i}.
%   \end{aligned}
%   \end{align}
%   Then, we arrive at conditions (\ref{LMI1}).

Finally, at a switching instant from ${\cal P}_i$ to ${\cal P}_j$, the post-jump Lyapunov function satisfies
$V_j(x_{cl}^+) = {x_{cl}^+}^T X_{j}(\rho) x_{cl}^+ = x_{cl}^T A_{r,ij}^T(\rho) X_{j}(\rho) A_{r,ij}(\rho) x_{cl}$.
The boundary condition (\ref{eqn:hystana_LMI1}) is equivalent to $X_{i}(\rho) - A_{r,ij}^T(\rho) X_{j}(\rho) A_{r,ij}(\rho) \geq 0$,
which, by Schur complement, can be written as
\begin{equation}
\label{eqn:boundary}
\begin{bmatrix}
X_{i}(\rho) & \star \\
X_{j}(\rho) A_{r,ij}(\rho) & X_{j}(\rho)
\end{bmatrix} \geq 0.
\end{equation}
Note that
\begin{align*}
Z_{1,j}^T(\rho) X_{j}(\rho) A_{r,ij}(\rho) Z_{1,i}(\rho)
&= Z_{2,j}^T(\rho) A_{r,ij}(\rho) Z_{1,i}(\rho) \\
&= \begin{bmatrix}
R_i(\rho) & I_n \\
\hat{\Delta}_{ij}(\rho) & S_j(\rho)
\end{bmatrix}
\end{align*}
where
\begin{equation}
\label{barDel}
\hat{\Delta}_{ij}(\rho) = S_j(\rho) R_i(\rho) + N_j(\rho) \Delta_{ij}(\rho) M_i^T(\rho).
\end{equation}
Pre- and post-multiplying by ${\it diag}\{Z_{1,i}^T(\rho),
Z_{1,j}^T(\rho)\}$ and its transpose on eqn. (\ref{eqn:boundary}) yields
\[
\begin{bmatrix}
R_i(\rho) & \star & \star & \star \\
I_n & S_i(\rho) & \star & \star \\
R_i(\rho) & I_n & R_j(\rho) & \star \\
\hat{\Delta}_{ij}(\rho) & S_j(\rho) & I_n & S_j(\rho)
\end{bmatrix} \geq 0.
\]
This establishes (\ref{eqn:hystLMI4}).
The controller reset matrix can then be recovered by inverting equation (\ref{barDel}).
\end{proof}

After solving for the matrix functions $R_i(\rho)$ and $S_i(\rho)$, the controller gains are reconstructed using the standard formulas in \cite{WuYPB96,HugW.B12}:
\begin{align}
\hspace*{-0.1in} A_{k,i}(\rho, \dot{\rho}) & = -N_i^{-1}(\rho) \left\{ A^T_i(\rho) - S_i(\rho) \frac{d R_i}{d t} - N_i(\rho) \frac{d M_i^T}{d t} \right. \nonumber \\
& \hspace*{-0.25in} + S_i(\rho) \left[ A_i(\rho) + B_{2,i}(\rho) F_i(\rho) +
L_i(\rho) C_{2,i}(\rho) \right] R_i(\rho) \nonumber \\
& \hspace*{-0.25in} + \frac{1}{\gamma_i}
S_i(\rho) \left[ B_{1,i}(\rho) + L_i(\rho) D_{21,i}(\rho) \right]
B^T_{1,i}(\rho) \nonumber \\
& \left. \hspace*{-0.25in} + \frac{1}{\gamma_i} C^T_{1,i}(\rho)
\left[ C_{1,i}(\rho) + D_{12,i}(\rho)
F_i(\rho) \right] R_i(\rho) \right\} M_i^{-T}(\rho) \label{eqn:lpv_aaaa} \\
B_{k,i}(\rho) & = N_i^{-1}(\rho) S_i(\rho) L_i(\rho) \label{eqn:lpv_bbbb} \\
C_{k,i}(\rho) & = F_i(\rho) R_i(\rho) M_i^{-T}(\rho) \label{eqn:lpv_cccc} \\
D_{k,i}(\rho) & = 0 \label{eqn:lpv_dddd} \\
\Delta_{ij} &= N_j^{-1}(\rho) \left(\hat{\Delta}_{ij} - S_j(\rho) R_i(\rho) \right) M_i^{-T}(\rho),
\end{align}
where $M_i(\rho) N_i^T(\rho) = I - R_i(\rho) S_i(\rho)$. The matrices $F_i(\rho)$ and $L_i(\rho)$ are defined by
\begin{align*}
F_i(\rho) & = -\left( D_{12,i}^T(\rho) D_{12,i}(\rho) \right)^{-1} \left[ \gamma_i B^T_{2,i}(\rho)
R_i^{-1}(\rho) + D_{12,i}^T(\rho) C_{1,i}(\rho) \right] \\
L_i(\rho) & = -\left[ \gamma_i S_i^{-1}(\rho) C^T_{2,i}(\rho) +
B_{1,i}(\rho) D_{21,i}^T(\rho) \right] \left( D_{21,i}(\rho) D^T_{21,i}(\rho) \right)^{-1}.
\end{align*}

The key idea behind the proposed hybrid LPV control
framework is to combine locally optimized LPV controllers
with a hysteresis-based switching mechanism while
preserving global stability.
Parameter-dependent Lyapunov functions guarantee
exponential stability and performance within each parameter  subset, whereas the hysteresis switching logic enforces
non-increasing Lyapunov values at switching instants.
The controller state reset mechanism bridges Lyapunov
functions across adjacent regions, converting otherwise
non-convex boundary conditions into tractable LMIs.
As a result, the method achieves reduced conservatism and
improved local performance without sacrificing stability
under parameter variations and controller switching.

%   By limiting parameter variation within each subregion, the %   hybrid LPV controller achieves reduced conservatism and
%   improved local performance, while hysteresis switching
%   ensures global stability.
Different performance levels are permitted across parameter subsets, allowing the controller gains to adapt to local operating conditions and providing a flexible framework for high-performance LPV control.
Specifically, the hybrid LPV control synthesis conditions in Theorem \ref{thm:hyst_syn} can be formulated as a convex optimization problem of the form
%   The parameter set ${\cal P}$ is partitioned into multiple
%   subregions ${\cal P}_1, {\cal P}_2, \ldots, {\cal
%   P}_{N_p}$.
%   For each subset, a local LPV controller is synthesized
%   according to Theorem \ref{thm:hyst_syn}.
%   The global gain-scheduled controller is then constructed
%   by switching among these local controllers as $\rho$
%   transitions between subsets.
\begin{align*}
\min_{R_,S_i,\hat{\Delta}_{ij},\gamma_i} & \sum_{i=1}^{N_p} \alpha_i \gamma_i \\ \mbox{subj. to} \quad & (\ref{eqn:hystLMI1})-(\ref{eqn:hystLMI4})
\end{align*}
where $\alpha_i, i \in Z_{N_p}$ are weighting coefficients that specify the relative performance priorities of the individual parameter subsets.

%	The solvability conditions provided in
%	(\ref{eqn:lpv_lmi1})--(\ref{eqn:lpv_lmi3}) are clearly
%	infinite-dimensional, as is the solution space. To approximate, we
%	restrict the search of parameter-dependent Lyapunov function to a
%	span of finite number of basis functions. That is, let
%	\[
%	X^k(\rho) = \sum_{i=1}^{N_f} f_i^k(\rho) X_i^k, \qquad Y^k(\rho) =
%	\sum_{j=1}^{N_g} g_j^k(\rho) Y_j^k
%	\]
%	where $\left\{ f_i^k(\rho) \right\}_{i=1}^{N_f}$ and $\left\{
%	g_j^k(\rho) \right\}_{i=1}^{N_g}$ are user-specified scalar basis
%	functions. $X_i^k, Y_j^k$ are new optimization variables to be
%	determined. In particular, different basis functions can be used
%	over different parameter subsets with the continuity constraints
%	satisfied at the boundary. After such a parameterization,
%	the LPV synthesis conditions can be solved using a gridding method
%	over the entire parameter space.

\section{AMB Control Case Study}
\label{sec:amb}

This section applies the proposed hybrid LPV control methodology to an active magnetic bearing (AMB) system and demonstrates its advantages over conventional LPV designs.

\subsection{Rotor Dynamics Modeling}
\label{subsec:model}

Due to the linear dependence of the system dynamics on rotor speed, the nonlinear gyroscopic equations governing AMBs can be simplified into a set of linear time-varying differential equations. Nevertheless, AMB rotor dynamics are inherently unstable: even small mass imbalances can generate large synchronous disturbances at the rotor's rotational frequency. Consequently, a gain-scheduled control strategy is required to ensure effective disturbance rejection over a wide operating speed range.

%	The rotational motion of a magnetic bearing can be derived from its rigid
%	body dynamics, which is \cite{MohB95,TsiM96}
%	\begin{align*}
%	\ddot{\theta} & = -\frac{\rho J_a}{J_r} \dot{\psi} + \frac{\ell}{J_r}
%	(f_{r1} - f_{r2} + f_{\ell 2} - f_{\ell 1} + f_{d \theta}) \\
%	\ddot{\psi} & = \frac{\rho J_a}{J_r} \dot{\theta} + \frac{\ell}{J_r}
%	(f_{r3} - f_{r4} + f_{\ell 4} - f_{\ell 3} + f_{d \psi})
%	\end{align*}
%	where $\theta, \psi$ are the Euler angles denoting the orientation of
%	rotor centerline. $J_a, J_r$ are the moment of inertia of the rotor in
%	axial and radial directions, respectively. $\rho$ denotes the rotor
%	speed. The magnetic forces generated by four pairs of electromagnets are
%	denoted by $f_{ri}, f_{\ell i}$ for $i = 1, 2, 3, 4$. $f_{d \theta},
%	f_{d \psi}$ are disturbance forces caused by gravity, modeling errors,
%	imbalances, etc.
%
%	The electromagnetic force $f_j$ is related to the voltage $e_j$ across
%	the $jth$ coil through the magnetic flux $\psi_j$ by the equations
%	\begin{align*}
%	f_j = k \phi_j^2 \left( 1 + \frac{2 g_j}{\pi h} \right) \\
%	e_j = N \frac{d \phi_j}{d t} + \frac{2 R}{\nu_0 A N} g_j \phi_j
%	\end{align*}

The complete rigid-body dynamic model of the AMB system was derived in \cite{MohB95,TsiM96} and is given by
\begin{align}
\ddot{\theta} & = -\frac{\rho J_a}{J_r} \dot{\psi} + \frac{\ell}{J_r}
(-4 c_2 \ell \theta + 2 c_1 \phi_{\theta} + f_{d \theta}) \\
\ddot{\psi} & = \frac{\rho J_a}{J_r} \dot{\psi} + \frac{\ell}{J_r}
(-4 c_2 \ell \phi + 2 c_1 \phi_{\psi} + f_{d \psi} ) \\
N \dot{\phi}_{\theta} & = e_{\theta} + 2 d_2 \ell \theta - d_1 \phi_{\theta} \\
N \dot{\phi}_{\psi} & = e_{\psi} + 2 d_2 \ell \psi - d_1 \phi_{\psi},
\end{align}
where $\theta$ and $\psi$ denote the Euler angles describing the rotor centerline orientation; $J_a$ and $J_r$ are the axial and radial moments of inertia, respectively; and $\rho$ denotes the rotor speed. The variables $\phi_{\theta}$ and $\phi_{\psi}$ represent the differential magnetic fluxes generated by the electromagnetic actuator pairs, while $e_{\theta}$ and $e_{\psi}$ are the corresponding control voltages. The constants $c_1$, $c_2$, $d_1$, and $d_2$ depend on $\Phi_0$, $G_0$, $R$, $A$, $N$, $\nu_0$, and the bearing geometry, as defined in
\begin{align*}
c_1 & = 2 k \Phi_0 \left( 1 + \frac{2 G_0}{\pi h} \right), \qquad
c_2 = \frac{2 k \Phi_0^2}{\pi h}, \\
d_1 & = \frac{2 R G_0}{\nu_0 A N}, \qquad d_2 = \frac{2 R \Phi_0}{\nu_0 A N}
\end{align*}

The system parameters used in this study are summarized in Table I.
\begin{table}[htb]
\begin{center}
\begin{tabular}{c|c} \hline
Parameter & Value \\ \hline
$A$ area of each pole ($mm^2$) & $1531.79$ \\
$h$ pole width ($mm$) & $40.00$ \\
$G_0$ nominal gap($mm$) & $0.55$ \\
$J_r$ radial moment of inertia ($kg \cdot m^2$) & $0.333$ \\
$J_a$ axial moment of inertia ($kg \cdot m^2$) & $0.0136$ \\
$\ell$ half the length of the shaft ($m$) & $0.13$ \\
$k$ & $4.6755576 \times 10^8$ \\
$N$ number of coil turns & $400$ \\
$R$ coil resistance ($Ohm$) & $14.6$ \\
$\Phi_0$ nominal airgap ($Wb$) & $2.09 \times 10^{-4}$ \\ \hline
\end{tabular} \\
\caption{Active magnetic bearing parameters.}
\end{center}
\end{table}

Rotor imbalance forces $f_{d\theta}$ and $f_{d\psi}$ are modeled as sinusoidal disturbances:
\begin{align*}
f_{d \theta} &= \frac{(J_r - J_a)}{\ell} \rho^2 \tau \cos(\rho t), \\
f_{d \psi} &= \frac{(J_r - J_a)}{\ell} \rho^2 \tau \sin(\rho t).
\end{align*}
For automatic balancing design, these forces are treated as sinusoidal sensor noise
$n^T = \left[ \ell \tau \sin (\rho t + \lambda) \right.$
$\left. \ \ \ell \tau \cos (\rho t + \lambda) \right]$, which can be represented in state-space form as
\begin{align}
\dot{x}_{di} & = \begin{bmatrix} 0 & -\rho \\ \rho & 0 \end{bmatrix} x_{di} +
\begin{bmatrix} 1 \\ 0 \end{bmatrix} \tilde{d} \\
n & = \begin{bmatrix} 1 & 0 \\ 0 & 1 \end{bmatrix} x_{di}
\end{align}

Let $x^T = \left[ \ell \theta \ \ \ell \psi \ \ \ell \dot{\theta} \ \
\ell \dot{\psi} \ \ \phi_\theta \ \ \phi_\psi \ \ x_{d1} \ \ x_{d2}
\right]$, $u^T = \left[ e_{\theta} \ \
e_{\psi} \right]$ and $w = \tilde{d}$. Combining the rotor dynamics with the disturbance model yields the linearized state-space representation
\begin{align*}
\dot{x} & = A(\rho) x + B_1 w + B_2 u \\
z & = C_1 x + D_{11} w + D_{12} u \\
y & = C_2 x + D_{21} w + D_{22} u
\end{align*}
with
\begin{align*}
A(\rho) & = \begin{bmatrix}
0 & 0 & 1 & 0 & 0 & 0 & 0 & 0 \\
0 & 0 & 0 & 1 & 0 & 0 & 0 & 0 \\
-\frac{4 c_2}{m} & 0 & 0 & -\frac{\rho J_a}{J_r} & \frac{2 c_1}{m} & 0 & 0 & 0
\\
0 & -\frac{4 c_2}{m} & \frac{\rho J_a}{J_r} & 0 & 0 & \frac{2 c_1}{m} & 0 & 0 \\
\frac{2 d_2}{N} & 0 & 0 & 0 & -\frac{d_1}{N} & 0 & 0 & 0 \\
0 & \frac{2 d_2}{N} & 0 & 0 & 0 & -\frac{d_1}{N} & 0 & 0 \\
0 & 0 & 0 & 0 & 0 & 0 & 0 & -\rho \\
0 & 0 & 0 & 0 & 0 & 0 & \rho & 0
\end{bmatrix}, \\
B_1 & = \begin{bmatrix}
0_{6 \times 1} \\
\begin{matrix} 1 \\ 0 \end{matrix}
\end{bmatrix}, \qquad
B_2 = \frac{1}{N} \begin{bmatrix}
0_{4 \times 2} \\
I_2 \\
0_{2 \times 2}
\end{bmatrix} \\
C_1 & = \begin{bmatrix}
\begin{matrix} I_2 & 0_{2 \times 6} \end{matrix} \\
0_{2 \times 8}
\end{bmatrix}, \qquad C_2 = \begin{bmatrix}
I_2 & 0_{2 \times 4} & I_2
\end{bmatrix} \\
D_{11} &= 0_{4 \times 1}, \qquad
D_{12} = \begin{bmatrix}
0_{2 \times 2} \\
I_2
\end{bmatrix} \\
D_{21} &= 0_{2 \times 1}, \qquad
D_{22} = 0_{2 \times 2}.
\end{align*}
The resulting model is an LPV system, with rotor speed $\rho$ serving as the scheduling parameter. For gain-scheduled control, $\rho$ is assumed to be measurable in real time.

\subsection{Hybrid LPV Controller Design and Simulation}
\label{subsec:lpvcontr}

The objective of the hybrid LPV controller design is to asymptotically stabilize the AMB system over the entire operating speed range while minimizing a performance output defined as a weighted combination of bearing forces, rotor displacements at bearing locations, and control effort.

Although the scheduling parameter varies with time in the LPV model, it is treated as a frozen parameter during controller synthesis. Performance objectives are enforced through weighting functions, and the resulting weighted open-loop interconnection is shown in Fig. \ref{fig:wtop},
\begin{figure}[!htb]
\centering
\includegraphics[width=0.5\hsize]{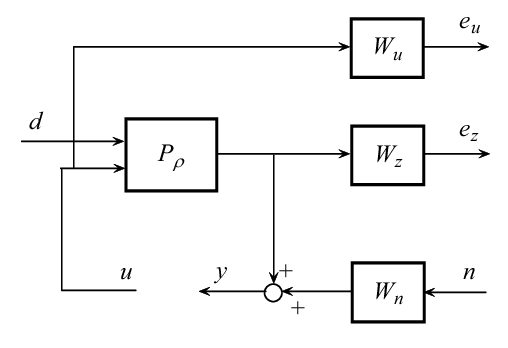}
%   \input{texfig/wtop.tex}
%   \end{center}
\caption{Weighted open-loop interconnection for the magnetic bearing
system.}
\label{fig:wtop}
\end{figure}
where the selected weighting functions are given by
\begin{align*}
W_e(s) & = \frac{30 (s + 8)}{s + 0.001} I_2 \\
W_u(s) & = \frac{0.01 (s+100)}{s + 100000} I_2, \\
W_n(s) &= 0.001 I_2.
\end{align*}

The rotor speed is assumed to vary within the range $300-2000$ rad/s, over which significant gyroscopic effects are observed. To enable multiple-region LPV synthesis, the parameter domain is first partitioned into two overlapping subregions, $[300,1200]$ rad/s and $[1100,2000]$ rad/s, with ten uniformly spaced grid points in each region.

An LPV controller first synthesized using a single parameter-dependent quadratic Lyapunov function (PDLF) over the entire domain \cite{WuYPB96} yields a performance bound from  $0.3472$ to $0.7799$ for different parameter variation rates, indicating substantial conservatism.
The proposed hybrid LPV approach employs two identical basis functions within each subregion to parameterize the Lyapunov matrices:
\begin{align*}
f_1^k(\rho) & = 1, \\
g_1^k(\rho) & = 1, \qquad g_2^k(\rho) = \rho
\end{align*}
for $k = 1,\ldots,N_p$, where $R(\rho)=\sum_{i=1}^{n_f} f_i^k R_i^k$ and $S(\rho)=\sum_{i=1}^{n_f} f_i^k S_i^k$.
Within each subregion, $R(\rho)$ is chosen to be constant, eliminating controller gain dependence on the parameter rate $\dot{\rho}$.
Since the AMB model contains only one scheduling parameter, the switching matrices $\Delta_{ij}$ are specified as constants at the switching boundaries.
The resulting performance levels under different assumptions on parameter variation rates are summarized in Table \ref{tab:PDLFresults1}.
Since a conventional LPV controller is designed to cover the entire parameter space, it is inherently suboptimal at individual frozen parameter values.
In contrast, the proposed hybrid LPV controller achieves performance comparable to that of a single PDLF-based design overall, while providing improved performance in selected operating subregions, particularly at higher rotor speeds.

\begin{table}[htb]
\begin{center}
\begin{tabular}{c|cc} \hline
Parameter variation rate (rad/s) & Hybrid LPV & LPV \\ \hline
$\left[ 0.1 \ \ 0.1 \right]$ & $[0.3687 \ \ 0.2982]$ & 0.3472 \\
$\left[ 5 \ \ 5 \right]$ & $[0.3698 \ \ 0.2976]$ & 0.3496 \\
$\left[ 50 \ \ 50 \right]$ & $[0.4090 \ \ 0.2891]$ & 0.3796 \\
$\left[ 300 \ \ 300 \right]$ & $[0.8590 \ \ 0.3946]$ & 0.7799 \\ \hline
\end{tabular}
\caption{${\cal L}_2$ gain performance using single PDLF and multiple PDLFs over two regions.}
\label{tab:PDLFresults1}
\end{center}
\end{table}

To further investigate scalability, the parameter space is also partitioned into four overlapping subregions: $[300,800]$, $[700,1200]$, $[1100,1600]$, and $[1500,2000]$ rad/s, with six grid points per region.
The same basis functions are used for fair comparison, and the results are reported in Table \ref{tab:PDLFresults2}.
It is clear that increase parameter subsets help improving local controlled performance.

\begin{table}[htb]
\begin{center}
\begin{tabular}{c|cc} \hline
Parameter variation rate (rad/s) & hybrid LPV \\ \hline
$\left[ 0.1 \ \ 0.1 \ \ 0.1 \ \ 0.1 \right]$ & $[0.5674 \ \ 0.1866 \ \ 0.1817 \ \ 0.1816]$ \\
$\left[ 5 \ \ 5 \ \ 5 \ \ 5 \right]$ & $[0.5696 \ \ 0.1856 \ \ 0.1808 \ \ 0.1805]$ \\
$\left[ 50 \ \ 50 \ \ 50 \ \ 50 \right]$ & $[0.5781 \ \ 0.1933 \ \ 0.1875 \ \ 0.1867]$ \\
$\left[ 300 \ \ 300 \ \ 300 \ \ 300 \right]$ & $[0.9471 \ \ 0.4097 \ \ 0.3416 \ \ 0.3223]$ \\ \hline
\end{tabular}
\caption{${\cal L}_2$ gain performance using multiple PDLFs over four regions.}
\label{tab:PDLFresults2}
\end{center}
\end{table}

Finally, nonlinear simulations are conducted using a time-varying rotor speed profile:
\[
\rho(t) = \left\{ \begin{tabular}{ll}
$650$ rad/s & $0 < t < 0.5 $ s \\
$650 + 50 (t-0.5)$ rad/s & $0.5 \leq t < 13.5 $ s \\
$1300$ rad/s & $13.5 \leq t < 16 $ s \\
$1300 - 50 (t-16)$ rad/s & $16 \leq t < 22 $ s\\
$1000$ rad/s & $t \geq 22$ s
\end{tabular} \right.
\]
This profile intentionally crosses scheduling boundaries to illustrate controller switching behavior.
The disturbance magnitude is set to $\tilde d = 1.3 \times 10^{-4}$.
The hybrid LPV controller with four parameter sub-regions are examined.
Controller switches occur at $t=3.5$ s and $t=11.5$ s during acceleration, corresponding to $\rho=800$ and $1200$ rad/s, and at $t=20$ s during deceleration at $\rho=700$ rad/s.

The nonlinear simulation results with the corresponding controller index can be found in Figs. \ref{fig:hybsimu}-\ref{fig:ctrlseq}.
Aside from small transients during switching, simulation results demonstrate performance comparable to that of a single LPV controller. The hybrid LPV design effectively exploits speed-dependent tuning, achieving improved performance under time-varying operating conditions.

\begin{figure}[!htb]
\begin{minipage}[b]{3.5in}
%   \centering\epsfig{file=epsfile/hyb_x.eps,height=2.0in,width=3.0in} \\
\includegraphics[width=0.8\hsize]{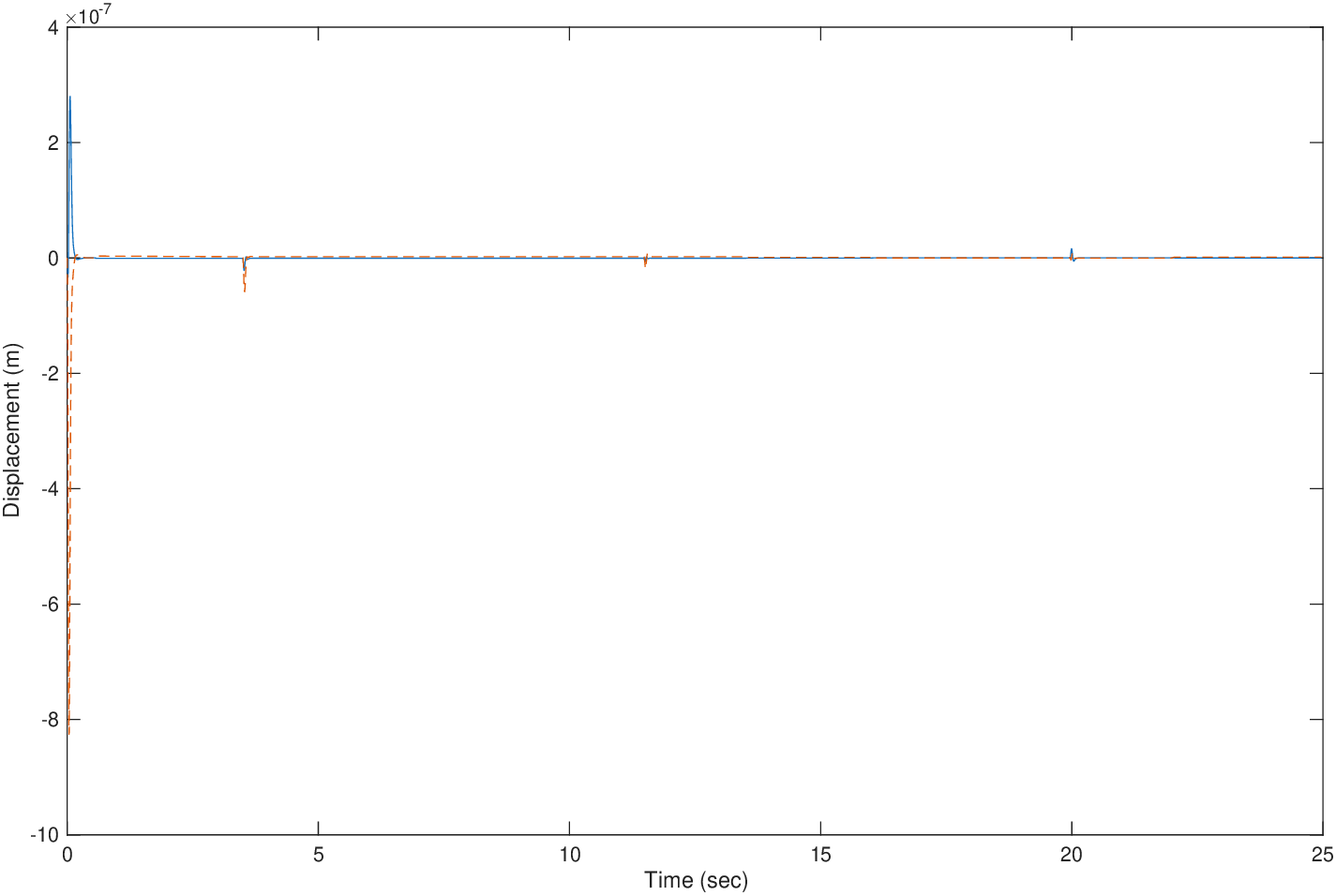} \\
\centering{(a). rotor displacement: $x_1$--solid, $x_2$--dash.}
\end{minipage}
\hfill
\begin{minipage}[b]{3.5in}
%   \centering\epsfig{file=epsfile/hyb_u.eps,height=2.0in,width=3.0in} \\
\includegraphics[width=0.8\hsize]{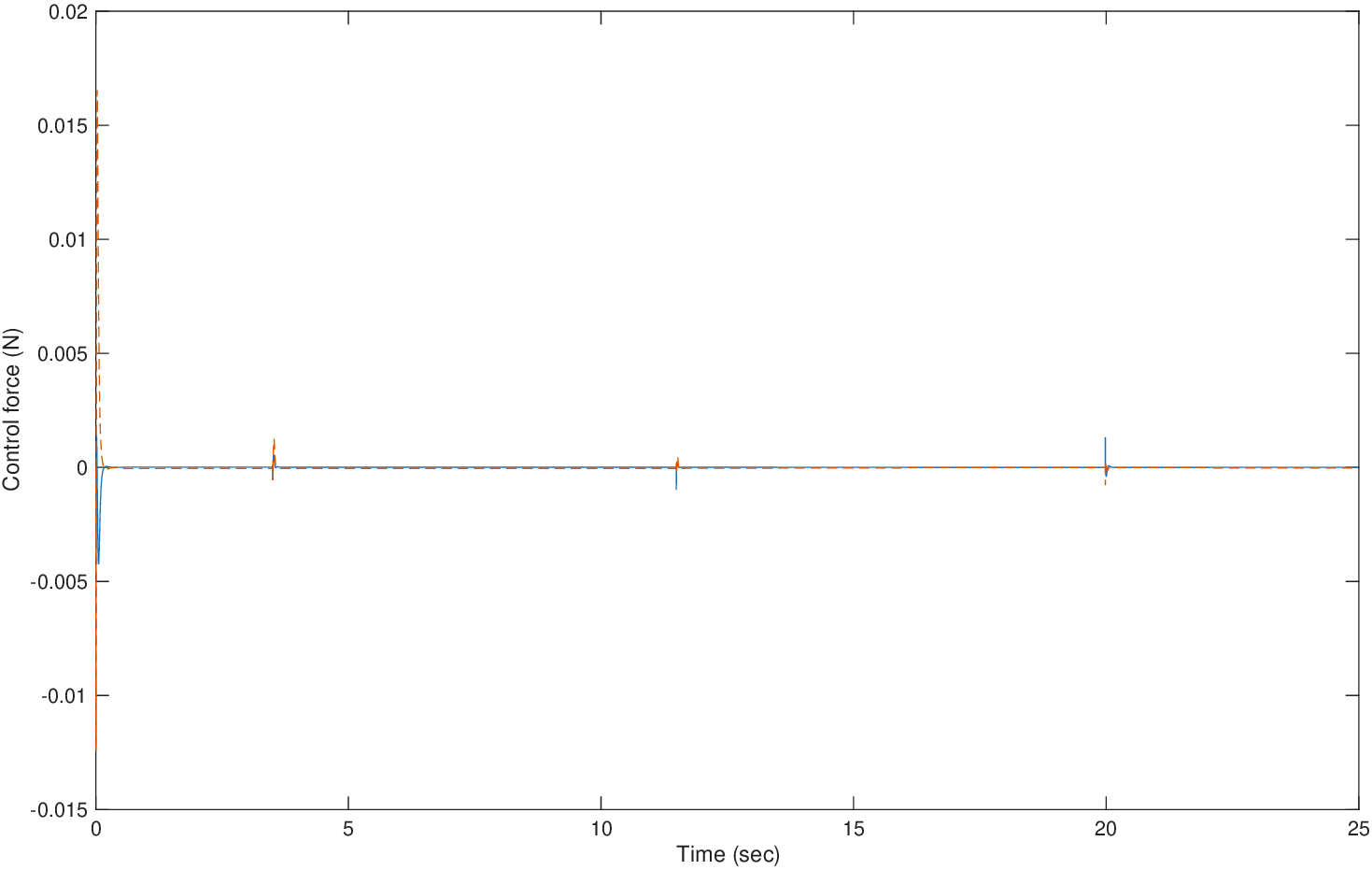} \\
\centering{(b). control force: $u_1$--solid, $u_2$--dash.}
\end{minipage}
\caption{The performance of hybrid LPV controller for a time-varying rotor speed profile.}
\label{fig:hybsimu}
\end{figure}

\begin{figure}[!htb]
\centering
\includegraphics[width=0.5\hsize]{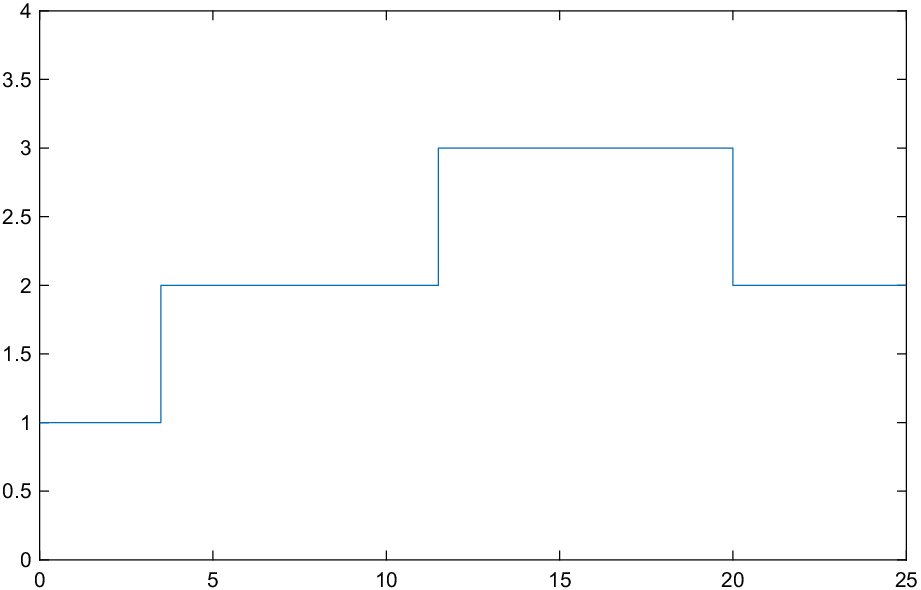}
\caption{LPV controller activation sequence.}
\label{fig:ctrlseq}
\end{figure}

\section{Concluding Remarks}
\label{sec:conclusion}

This paper has presented a hybrid LPV control framework with hysteresis switching logic for high-performance gain-scheduled control of systems with large parameter variations.
By introducing a controller state-reset mechanism and enforcing Lyapunov function monotonicity across parameter subset boundaries, closed-loop stability under switching is guaranteed without requiring Lyapunov continuity.
The resulting synthesis conditions are formulated as convex
LMI optimization problems, allowing efficient and simultaneous computation of both local LPV controllers and reset matrices.

The proposed approach was applied to an active magnetic bearing (AMB) system, whose rotor dynamics exhibit strong dependence on rotational speed.
By partitioning the parameter space and switching among locally optimized LPV controllers, the hybrid design reduced conservatism associated with conventional LPV methods.
Simulation results demonstrate effective rejection of synchronous imbalance disturbances and improved performance over a wide operating speed range, particularly in high-speed regimes.

Future research will focus on extending the proposed framework to flexible-mode AMB models, where structural resonances,
unmodeled dynamics, and uncertainty become increasingly
important at high rotational speeds \cite{BalWS2012}. 
The integration of robust and hybrid gain-scheduling
techniques is expected to further enhance performance,
stability margins, and reliability in advanced magnetic bearing applications.


\begin{thebibliography}{99}

\bibitem{Ant.IEEE00}
P.~J. Antsaklis,
``A brief introduction to the theory and applications of
hybrid systems,''
\emph{Proc. {IEEE}}, 88(7):879-887, 2000.

\bibitem{AtoSMM2022}
H. Atoui, O. Sename, V. Milanes, and J.J. Martinez,
``Smooth switching of multi-LPV control systems based-on Youla-Kucera parametrization,''
{\em IFAC PaperOnLine}, 55-35:19-24, 2022.

%   \bibitem{ApkA97}
%   P. Apkarian and R.J. Adams,
%   ``Advanced Gain-scheduling Techniques for Uncertain
%   Systems,''
%   {\em IEEE Trans. Contr. Syst. Tech.}, {\bf 6}(1):21-32,
%   1997.

%   \bibitem{BecP94}
%   G.~Becker and A.K. Packard,
%   ``Robust performance of linear parametrically varying
%   systems using parametrically dependent linear dynamic
%   feedback,''
%   {\em Syst. Contr. Letts.}, 23(3):205-215,1994.

\bibitem{BalWS2012}
H.M.N.K. Balini, J. Witte, and C.W. Scherer,
``Synthesis and implementation of gain-scheduling and LPV controllers for an AMB system,''
{\em Automtica}, 48:512-517, 2012.

\bibitem{BoyEFB94}
S.P. Boyd, L.~{El Ghaoui}, E.~Feron, and V.~Balakrishnan,
{\em Linear Matrix Inequalities in Systems and Control
Theory},
Philadelphia, {PA}:SIAM, 1994.

\bibitem{Bra.TAC98}
M.~S. Branicky,
``Multiple {Lyapunov} functions and other analysis tools for
  switched and hybrid systems,''
\emph{{IEEE} Trans. Automat. Control}, 43(4):475-482, 1998.

\bibitem{BraBM.TAC98}
M.~S. Branicky, V.~S. Borkar, and S.~K. Mitter,
``A unified framework for hybrid control: Model and optimal control theory,''
\emph{{IEEE} Trans. Automat. Control}, 43(1):31-45, 1998.

%   \bibitem{FeuGS.ACC97}
%   A.~Feuer, G.~C. Goodwin, and M.~Salgado,
%   ``Potential benefits of hybrid control for linear time
%   invariant plants,''
%   in \emph{Proc. of American Control Conference}, pp. 2790-2794, 1997.

%   \bibitem{FitK99}
%   R. Fittro and C.R. Knospe,
%   ``$\mu$-control of a high speed spindle thrust magnetic
%   bearing,''
%   in {\em Proc. 6th IEEE Conf. Contr. Appli.}, 1999.

\bibitem{FujHM93}
M. Fujita, K. Hatake, and F. Matsumura,
``Loop shaping based robust control of a magnetic bearing,''
{\em IEEE Contr. Syst. Magazine}, 13(1):57-65, 1993.

%   \bibitem{GahNLC95}
%   P. Gahinet, A. Nemirovskii, A. Laub, and M. Chilali,
%   {\em LMI Control Toolbox},
%   Mathworks Inc., Natick, {MA}, 1995.

\bibitem{GoeST.B12}
R.~Goebel, R.G. Sanfelice, and A.R. Teel,
\emph{Hybrid Dynamical Systems},
Princeton, NJ: Princeton University Press, 2012.

%   \bibitem{HesM.CDC99}
%   J.~P. Hespanha and A.~S. Morse,
%   ``Stability of switched systems with average dwell-time,''
%   in \emph{Proc. 38th. {IEEE} CDC.}, pp. 2655-2660, 1999.

\bibitem{HugW.B12}
H. Hughes and F. Wu,
``LPV ${\cal H}_\infty$ Control of Flexible Hypersonic Vehicle,'' in {\em Control of Linear Parameter Varying Systems with Applications} (J. Mohammadpour and C.W. Scherer Ed.), Springer, 2012. DOI: 10.1007/978-1-4614-1833-7\_16

%   \bibitem{KolM.TAC96}
%   I.~Kolmanovsky and N.~H. McClamroch,
%   ``Hybrid feedback laws for a class of cascade nonlinear
%   control systems,''
%   \emph{{IEEE} Trans. Autom. Control}, 41(9):1271-1282, 1996.

\bibitem{LiberzonB03}
D.~Liberzon,
\emph{Switching in Systems and Control},
Boston, MA: Birkhauser, 2003.

\bibitem{LibM.CSM99}
D.~Liberzon and A.~S. Morse,
``Basic problems in stability and design of switched systems,'' \emph{IEEE Contr. Syst. Magazine}, 19(5):59-70, 1999.

\bibitem{LinA.TAC09}
H.~Lin and P.~J. Antsaklis,
``Stability and stabilizability of switched linear systems: A survey of recent results,''
\emph{{IEEE} Trans. Automat. Control}, 54(1):308-322, 2009.

\bibitem{LuW.ACC04}
B. Lu and F. Wu, 
``Control design of switched LPV systems using multiple parameter-dependent Lyapunov functions,'' 
in {\em Proc. Amerian Control Conference}, pp. 3875-3880, 2004. DOI: 10.23919/ACC.2004.1384517

\bibitem{LuW.Au04}
B.~Lu and F.~Wu,
``Switching {LPV} control designs using multiple parameter-dependent {Lyapunov} functions,''
\emph{Automatica}, 40:1973-1980, 2004.

%   \bibitem{LuW2006}
%   B. Lu and F. Wu,
%   ``Switching LPV control designs using multiple
%   parameter-dependent Lyapunov functions,''
%   {\em Automatica}, 40:1973-1980, 2004.

%	\bibitem{MasTA98}
%	S. Mason, P. Tsiotras and P. Allaire.
%	``Linear Parameter Varying Controllers for Flexible Rotors Supported
%	on Magnetic Bearings,''
%	in {\em Proc. 6th Int. Sym. Magnetic Bearings}, 1998.

\bibitem{MatNHF96}
F. Matsumura, T. Namerikawa, K. Hagiwara, and M. Fujita,
``Application of gain scheduled ${\cal H}_\infty$ robust controllers to a magnetic bearing,''
{\em IEEE Trans. Contr. Syst. Techno.}, 4(5):484-493, 1996.

\bibitem{McCK.IEEE00}
N.~H. McClamroch and I.~Kolmanovsky,
``Performance benefits of hybrid control design for linear and nonlinear systems,''
\emph{Proc. of IEE}, 88(7):1083-1096, 2000.

\bibitem{MohB95}
A.M. Mohamed and I. Busch-Vishmiac,
``Imbalance compensation and automation balancing in magnetic bearing
systems using the $Q$-parameterization theory,''
{\em IEEE Trans. Contr. Syst. Techno.}, 3(2):202-211, 1995.

\bibitem{Mor.B97}
A.~S. Morse,
\emph{Control using Logic-based Switching}, Heidelberg, Germany: Springer, 1997.

\bibitem{MorMG92}
A.S. Morse, D.Q. Mayne, and G.C. Goodwin,
``Applications of hysteresis switching in parameter adaptive control,''
{\em IEEE Trans. Automat. Control}, 37(9):1343-1354, 1992.

\bibitem{NonI96}
K. Nonami and T. Ito,
``$\mu$ synthesis of flexible rotor-magnetic bearing system,''
{\em IEEE Trans. Contr. Syst. Techno.}, 4(5):503-512, 1996.

%   \bibitem{Pac94}
%   A.K. Packard,
%   ``Gain Scheduling via Linear Fractional Transformations,''
%   {\em Syst. Contr. Letts.}, {\bf 22}(2):79--92, 1994.

%	\bibitem{Rug91}
%	W.J. Rugh.
%	\newblock Analytical framework for gain scheduling.
%	\newblock {\em {IEEE} Contr. Sys. Mag.}, {\bf 11}(1):74--84, 1991.

\bibitem{SchS.B00}
A.~van~der Schaft and H.~Schumacher,
\emph{An Introduction to Hybrid Dynamical Systems},
Springer, 2000.

%   \bibitem{Sch99}
%   C.W. Scherer,
%   ``Robust Mixed Control and LPV Control with Full Block Scalings,''
%   in {\em Recent Advances of LMI Methods in Control} (L. El
%   Ghaoui, S. Niculescu ed.), SIAM, 1999.

\bibitem{SchBT94}
G. Schweitzer, H. Bleuler, and A. Traxler,
{\em Active Magnetic Bearings: Basics, Properties and Applications},
Verlag vdf Hochschulverlag AG and der ETH, Zurich Germany, 1994.

\bibitem{ShoWMWK2007}
R. Shorten, F. Wirth, O. Mason, K. Wulff, and C. King, ``Stability criteria for switched and hybrid systems,''
{\em SIAM Review}, 49(4):545-592, 2007.

\bibitem{SivN96}
S. Sivrioglu and K. Nonami,
``LMI approach to gain scheduled ${\cal H}_{\infty}$ control beyond
PID control for gyroscopic rotor-magnetic bearing systems,''
in {\em Proc. 35th IEEE Conf. Dec. Contr.}, pp. 3694-3699, 1996.

\bibitem{SunG.B05}
Z.~Sun and S.~S. Ge,
\emph{Switched Linear Systems: Control and Design}, Verlag, NY: Springer, 2005.

%   \bibitem{TsiK97}
%   P. Tsiotras and C. Knospe,
%   ``Reducing Conservatism for Gain-scheduled ${\cal
%   H}_\infty$ Controllers for AMB's,''
%   in {\em Proc. MAG'97 Magnetic Bearings Ind. Conf. Exhib.}, %   1997.

\bibitem{TsiM96}
P. Tsiotras and S. Mason.
``Self-scheduled ${\cal H}_\infty$ controllers for magnetic bearings,''
in {\em Proc. ASME Int. Mech. Eng. Congress Expo.}, pp.~151-158, 1996.

%	\bibitem{Van88}
%	J.M. Vance,
%	{\em Rotordynamics of Turbomachinery},
%	Wiley Interscience, New York, NY, 1988.

%	\bibitem{Wu95}
%	F. Wu,
%	{\em Control of Linear Parameter Varying Systems},
%	Ph.D. Thesis, University of California at Berkeley, 1995.

\bibitem{Wu2001}
F. Wu,
``Switching LPV control design for magnetic bearing systems,'' in {\em Proc. IEEE Int. Conference on Control Applications}, pp. 41-47, 2001. DOI: 10.1109/CCA.2001.973835

\bibitem{WuYPB96}
F. Wu, X.H. Yang, A.K. Packard, and G. Becker.
``Induced ${\cal L}_2$ norm control for {LPV} systems with
bounded parameter variation rates,''
{\em Int. J. Robust Non. Contr.}, 6(9/10):983-998,
1996.

\bibitem{YeMH.TAC98}
H.~Ye, A.~N. Michel, and L.~Hou,
``Stability theory for hybrid dynamical systems,''
\emph{{IEEE} Trans. Automat. Control}, 43(4):461-474, 1998.

\bibitem{YuaW2015}
C. Yuan and F. Wu,
``Hybrid control for switched linear systems with
average swell time,''
{\em IEEE Trans. Automat. Control}, 60(1):240-245, 2015.

\bibitem{YuaLWD2016}
C. Yuan, Y. Liu, F. Wu, and C. Duan,
``Hybrid switched gain-scheduling control for missile
autopilot design,''
{\em J. of Guiance, Control, And Dynamics}, 39(10):2352-2363, 2016. DOI: 10.2514/1.G001791

\end{thebibliography}
\end{document}